\documentclass[12pt]{article}
\usepackage{latexsym,amsmath,amsfonts,natbib,bm}
\usepackage{amssymb,amsthm,rotating,threeparttable}
\usepackage{graphicx,subfigure}
\usepackage{amsmath}
\usepackage{setspace}
\usepackage{lscape}
\usepackage[pagewise,mathlines]{lineno}

\addtolength{\textwidth}{1.2in}
\addtolength{\oddsidemargin}{-0.5in}
\addtolength{\textheight}{1.6in}
\addtolength{\topmargin}{-0.8in}

\include{def}
\def\dispmuskip{\thinmuskip= 3mu plus 0mu minus 2mu \medmuskip=  4mu plus 2mu minus 2mu \thickmuskip=5mu plus 5mu minus 2mu}
\def\textmuskip{\thinmuskip= 0mu                    \medmuskip=  1mu plus 1mu minus 1mu \thickmuskip=2mu plus 3mu minus 1mu}
\def\beq{\dispmuskip\begin{equation}}    \def\eeq{\end{equation}\textmuskip}
\def\beqn{\dispmuskip\begin{displaymath}}\def\eeqn{\end{displaymath}\textmuskip}
\def\bea{\dispmuskip\begin{eqnarray}}    \def\eea{\end{eqnarray}\textmuskip}
\def\bean{\dispmuskip\begin{eqnarray*}}  \def\eean{\end{eqnarray*}\textmuskip}

\def\paradot#1{\vspace{1.3ex plus 0.7ex minus 0.5ex}\noindent{\bf\boldmath{#1.}}}
\newtheorem{theorem}{Theorem}

\newtheorem{lemma}{Lemma}
\newtheorem{proposition}{Proposition}

\newcommand*\samethanks[1][\value{footnote}]{\footnotemark[#1]}

\newcommand{\diag}{\text{diag}}

\newcommand{\wh}{\widehat}
\newcommand{\wt}{\widetilde}


\def\E{\mathbb{E}}                         


\def\a{\alpha}

\def\s{\sigma}

\def\t{\theta}
\def\b{\beta}

\def\N{{\cal N}}

\def\ESS{\text{\rm ESS}}

\def\AIS{\text{\rm AIS}}
\def\AISEL{\text{\rm AISEL}}

\def\CV{\text{\rm CV}}
\def\CT{\text{\rm CT}}

\def\Var{\text{\rm Var}}

\def\IS{\text{\rm IS}}

\def\diag{\text{\rm diag}}
\def\opt{\text{\rm opt}}

\begin{document}
\doublespacing
\title{Annealed Important Sampling for Models with Latent Variables}
\author{M.-N. Tran\thanks{Australian School of Business, University of New South Wales}
\and C. Strickland\samethanks
\and M. K. Pitt\thanks{Department of Economics, University of Warwick}
\and R. Kohn\samethanks[1]}
\maketitle

\begin{abstract}
This paper is concerned with Bayesian inference when the likelihood
is analytically intractable but can be unbiasedly estimated.
We propose an annealed importance sampling procedure
for estimating expectations with respect to the posterior.
The proposed algorithm is useful in cases
where finding a good proposal density is challenging,
and when estimates of the marginal likelihood are required.
The effect of likelihood estimation is investigated,
and the results provide guidelines on how to set up the precision of the likelihood estimation
in order to optimally implement the procedure.
The methodological results are empirically demonstrated in several simulated and real data examples.

\paradot{Keywords}
Intractable likelihood; Latent variables; Sequential Monte Carlo; Unbiasedness;
Marginal likelihood
\end{abstract}

\newpage
\section{Introduction} \label{Sec:introduction}
Models incorporating latent variables
 are very popular in many statistical applications.
For example, generalized linear mixed models \citep{Fitzmaurice:2011},
that use latent variables to account for dependence between observations,
appear in the genetics, social and medical sciences literatures
as well as many other areas of statistics.
State space models \citep{Durbin:2001},
whose latent variables follow a Markov process,
are used in economics, finance and engineering.
Gaussian process classifiers \citep[see, e.g.,][]{Filippone:2013,Rasmussen:2006},
that use a set of latent variables distributed
as a Gaussian process to account for uncertainty in predictions,
are used in computer science.

Inference about the model parameters $\theta$ in latent variable models can be  challenging because the likelihood
is expressed as an integral over the latent variables.
This integral is analytically intractable in general. It can  also be computationally challenging
when the dimension of the latent variables is high.
Recent work by \cite{Beaumont:2003,Andrieu:2009,Andrieu:2010}
shows that it is possible to carry out Bayesian inference
in latent variable models by using an unbiased estimate of the likelihood within a Markov chain Monte Carlo (MCMC)
sampling scheme. This method is known as particle MCMC.
However, the resulting Markov chain can often be trapped in local modes
and it is difficult to asses if it has converged.
We find that even for simple models it is difficult for the Markov chain to mix adequately
and the chain can take a long time to converge if the log of the estimated  likelihood is too variable.
The marginal likelihood is often used to choose between models. Another drawback of MCMC in general, and
particle MCMC in particular, is that it is often difficult to use it to estimate the marginal likelihood.

Another approach to Bayesian inference for models with latent variables
is importance sampling squared ($\IS^2$) proposed in \cite{Tran:2013}.
They show that importance sampling (IS)
with the likelihood replaced by its unbiased estimate is still valid for estimating
expectations with respect to the posterior.
$\IS^2$ offers several advantages over particle MCMC: (1)
It is easy to estimate the standard errors of the estimators;
(2) It is straightforward to parallelize the computation;
(3) It is straightforward to estimate the marginal likelihood.
However, as is typical of importance sampling algorithms,
a potential drawback with $\IS^2$ is that its performance may depend heavily on
the proposal density  for $\theta $. A good proposal density may
 be difficult to obtain in complex models.

When it is possible to evaluate the likelihood,
annealed importance sampling (AIS) \citep{Neal:2001} is a useful method for estimating
expectations with respect to the posterior for $\theta$.
AIS is an importance sampling method in which samples are first
drawn from an easily-generated distribution
and then moved towards the distribution of interest
through Markov kernels.
AIS explores the parameter space efficiently and is useful in cases where the target distribution is
multimodal and/or  when choosing an appropriate proposal density is challenging.

This article proposes an AIS algorithm for Bayesian inference when working with
an estimated likelihood, which we denote as AISEL (annealed importance sampling with an estimated likelihood).
Our first contribution is to show that the algorithm is valid for
estimating expectations with respect to the exact posterior when the likelihood is
estimated unbiasedly.
As with particle MCMC and $\IS^2$, it is important to understand the effect  of
estimating the likelihood on the resulting
inference.
The second contribution of this article is to answer this question
by comparing the efficiency of AISEL with the efficiency of the corresponding AIS procedure
that assumes that the 
likelihood is available. We show that the ratio of the efficiency of AISEL to that of AIS 
 is smaller than or equal to 1, and the ratio is equal to 1
if and only if the estimate of the likelihood is exact.
This  ratio  decreases exponentially with the
product of the variance of the log of the estimated likelihood
and a term that depends on the the annealing schedule
in the AIS algorithm.
The term based on the annealing schedule is small if the annealing schedule evolves slowly.
This result allows us to understand how much accuracy is lost when working with an estimated likelihood.
An attractive feature of AISEL is that it is more robust than $\IS^2$ and particle MCMC to the variability of the log likelihood
estimate. This is important when only highly variable estimates of the likelihood are available, which often occurs
if it is expensive to obtain accurate estimates of the likelihood.

The third contribution of the article is to provide theory and practical guidelines for optimally
choosing the number of particles to estimate the likelihood so as to minimize the overall
computational cost for a given precision.
The fourth contribution is to describe an efficient yet simple method
to compute the marginal likelihood, which is important for model choice.

The SMC$^2$  algorithm of \cite{Chopin:2012} sequentially updates
the posterior of the model parameters as new observations arrive.
The validity of the method is justified because
the likelihood estimated by the particle filter is unbiased.
In contrast, our AIS algorithm uses all the data and is static.
As discussed in their paper \citep[][Section 5.2]{Chopin:2012},
the reasoning used to justify SMC$^2$ does not apply to  tempered sampling
 in the spirit of the AIS,
because a tempered likelihood estimator is not an unbiased estimator of the corresponding tempered likelihood.
However, Section \ref{sec:formal justify} uses variable augmentation to
justify the validity of the AIS method when working with an unbiased likelihood estimate.

We illustrate the proposed methodology through a simulated example,
as well as the analysis of  a Pound/Dollar exchange rate dataset using a stochastic volatility model.
We show in these examples that the AIS method leads
to efficient inference when optimally implemented.

The article is organized as follows.
Section \ref{Sec:AIS} reviews the original AIS of \cite{Neal:2001}.
Section \ref{Sec:anneal} presents the main results.
Section \ref{Sec:Applications} presents the examples
and Section \ref{Sec:conclusion} concludes.
The technical proofs are in the Appendix.

\section{Annealed importance sampling}\label{Sec:AIS}
Let $p(y|\theta)$ be the density of the data $y$, where $\theta$ is a parameter vector belonging to a space $\Theta\subset\mathbb{R}^d$.  Let
$p(\t)$ be the prior for $\theta$ and $\pi(\t)\propto p(\t)p(y|\t)$  its posterior.
We are interested in the case where the likelihood $p(y|\theta)$ is analytically intractable but can be unbiasedly estimated.
The primary objective in Bayesian inference is to estimate an integral of the form
\beq\label{eq:target integral}
\E_\pi(\varphi)=\int_\Theta \varphi(\t)\pi(\theta)d\t \ ,
\eeq
for some $\pi$-integrable function $\varphi$ on $\Theta$.
We are also interested in estimating the marginal likelihood 
\beq\label{eq:llh}
p(y)=\int p(\t)p(y|\t)d\t.
\eeq

We now present the AIS procedure of \cite{Neal:2001}
when the likelihood $p(y|\theta)$ can be evaluated pointwise.
Let $\pi_0(\t)$ be some easily-generated density, such as a $t$ density or the prior.
Let $a_t,\ t=0,1,...,T$, be a sequence of real numbers such that $0=a_0<...<a_T=1$, which we call the annealing schedule.
A convenient choice is $a_t=t/T$.
AIS constructs the following  sequence of interpolation densities $\xi_{a_t}(\t),\ t=0,...,T$,
\beqn
\xi_{a_t}(\t) = \frac{\eta_{a_t}(\t)}{\int\eta_{a_t}(\t)d\t},\;\;\;\text{with}\;\;\;
\eta_{a_t}(\t) = \pi_0(\t)^{1-a_t}[p(\t)p(y|\t)]^{a_t}.
\eeqn
Note that $\xi_{a_0}(\t)=\pi_0(\t)$ and $\xi_{a_T}(\t)$ is the posterior $\pi(\t)$ of interest.
Denote by $K_{\xi_{a_t}}(\t,\cdot)$ a Markov kernel density conditional on $\t$ with invariant distribution $\xi_{a_t}$, $t=1,...,T-1$.
AIS draws $M$ weighted samples $\{w_i,\theta_i\}_{i=1}^M$ as follows.

\paradot{AIS algorithm} For $i=1,...,M$
\begin{itemize}
\item Generate $\theta^{(1)}\sim \xi_{a_0}(\cdot)$.
\item For $t=1,...,T-1$, generate $\theta^{(t+1)}\sim K_{\xi_{a_t}}(\theta^{(t)},\cdot)$.
\item Set $\t_i=\t^{(T)}$ and compute the unnormalized weight
\beqn
w_i = \frac{\eta_{a_1}(\theta^{(1)})}{\eta_{a_0}(\theta^{(1)})}\times\frac{\eta_{a_2}(\theta^{(2)})}{\eta_{a_1}(\theta^{(2)})}\times...\times \frac{\eta_{a_T}(\theta^{(T)})}{\eta_{a_{T-1}}(\theta^{(T)})}.
\eeqn
\end{itemize}
Note that the algorithm is parallelizable.
\cite{Neal:2001} shows that the above algorithm is an IS procedure operating on the extended space $\Theta^T$
with the artificial target density proportional to
\beqn
f(\t^{(1)},...,\t^{(T)})=\eta_{a_T}(\t^{(T)})L_{\xi_{a_{T-1}}}(\t^{(T)},\t^{(T-1)})...L_{\xi_{a_1}}(\t^{(2)},\t^{(1)}),
\eeqn
where
\beqn
L_{\xi_{a_t}}(\t^{(t+1)},\t^{(t)})=\frac{\xi_{a_t}(\t^{(t)})K_{\xi_{a_t}}(\t^{(t)},\t^{(t+1)})}{\xi_{a_t}(\t^{(t+1)})}
=\frac{\eta_{a_t}(\t^{(t)})K_{\xi_{a_t}}(\t^{(t)},\t^{(t+1)})}{\eta_{a_t}(\t^{(t+1)})}
\eeqn is a backward kernel density,
and the proposal density
\beqn
g(\t^{(1)},...,\t^{(T)})=\xi_{a_0}(\t^{(1)})K_{\xi_{a_1}}(\t^{(1)},\t^{(2)})...K_{\xi_{a_{T-1}}}(\t^{(T-1)},\t^{(T)}).
\eeqn
The original target density $\pi=\xi_{a_T}$ is the last marginal of $f(\t^{(1)},...,\t^{(T)})$ because
\beqn
\int L_{\xi_{a_t}}(\t^{(t+1)},\t^{(t)})d\t^{(t)}=1
\eeqn 
for all $t$.
This shows that AIS is a valid IS method with the weight $w\propto f/g$.
Hence, the weighted samples $\{W_i,\t_i\}_{i=1}^M$ with $W_i=w_i/\sum_{j=1}^M w_j$ approximate $\pi(\t)$,
i.e., $\sum_{i=1}^M W_i\varphi(\t_i)\to \E_\pi(\varphi)$ almost surely, for any $\pi$-integrable function $\varphi$.

The AIS procedure explores the parameter space efficiently,
and is useful when the target distribution is multimodal
and when choosing an appropriate proposal density is challenging.

\section{Annealed importance sampling with an estimated likelihood} \label{Sec:anneal}
This section presents an AISEL algorithm for
estimating the integral \eqref{eq:target integral} when the likelihood is
analytically intractable but can be estimated unbiasedly.
Let $\wh p_N(y|\t)$ denote an estimator of $p(y|\t)$ with $N$ the number of particles used to estimate the likelihood.
We define a sequence of functions $\wt\eta_{a_t}(\t),\ t=0,...,T$, by
\beqn
\wt\eta_{a_t}(\t) = \pi_0(\t)^{1-a_t}[p(\t)\wh p_N(y|\t)]^{a_t}.
\eeqn
We propose the following algorithm for generating $M$ weighted samples $\{\wt W_i,\t_i\}_{i=1}^M$
which approximate the posterior $\pi(\t)$; see Section~\ref{sec:formal justify}

\paradot{Algorithm 1 (AISEL)} Let $\pi_0(\t)$ be some easily-generated density.
\begin{enumerate}
\item Generate $\t_i\sim\pi_0(\t),\ i=1,...,M$.
Set $\wt W_i=1/M,\ i=1,...,M$.
\item For $t=1,...,T$
\begin{itemize}
\item[(i)] Weighting: compute the unnormalized weights
\beq\label{e:weights}
\wt w_i=\wt W_i\frac{\wt\eta_{a_t}(\t_i)}{\wt\eta_{a_{t-1}}(\t_i)}=\wt W_i[\pi_0(\t_i)]^{a_{t-1}-a_t}[p(\t_i)\wh p_N(y|\t_i)]^{a_t-a_{t-1}},
\eeq
and set the new normalized weights $\wt W_i={\wt w_i}/{\sum_{j=1}^{M}\wt w_j}$.
\item[(ii)] Resampling: If $\text{ESS}=1/\sum_{i=1}^{M} \wt W_i^2<\a M$ for some $0<\a<1$, e.g. $\a=1/2$, then resample from $\{\wt W_i,\t_i\}_{i=1}^{M}$, set $\wt W_i=1/M$ and (still) denote the resamples by $\{\wt W_i,\t_i\}_{i=1}^{M}$.
\item[(iii)] Markov move: for each $i=1,...,M$, move the sample $\t_i$ according to a Metropolis-Hastings step as follows.
Let $q_t(\t|\t^c)$ be a proposal with $\t^c=\t_i$ the current state.
Generate $\t^p\sim q_t(\t|\t^c)$ and set $\t_i=\t^p$ with probability
\beqn
\min\left(1,\frac{\pi_0(\t^p)^{1-a_t}[p(\t^p)\wh p_N(y|\t^p)]^{a_t}q_t(\t^c|\t^p)}{\pi_0(\t^c)^{1-a_t}[p(\t^c)\wh p_N(y|\t^c)]^{a_t}q_t(\t^p|\t^c)}\right).
\eeqn
Otherwise set $\theta_i = \theta^c$.
\end{itemize}
\end{enumerate}

If $\wh p_N(y|\t)=p(y|\t)$,
the above algorithm is a special case of the SMC sampler in \cite{DelMoral:2006}
for sampling from the sequence of distributions $\xi_{a_t}$.
The AIS algorithm of \cite{Neal:2001} is a special case
of this algorithm in which no resampling steps are performed.
It is widely known in the literature that
it is beneficial to incorporate resampling steps \citep{DelMoral:2006}.
The algorithm is also closely related to the resample-move algorithm of \cite{Gilks:2001},
except that they perform the resampling step in every iteration $t$.
\subsection{Formal justification}\label{sec:formal justify}
The output of Algorithm 1 is weighted samples $\{\wt W_i,\t_i\}_{i=1}^{M}$.
To prove that this algorithm is valid,
i.e. $\{\wt W_i,\t_i\}_{i=1}^{M}$ approximate $\pi(\t)$,
we make the following assumption.

\paradot{Assumption 1} $\E[\wh p_N(y|\t)]=p(y|\t)$ for every $\theta\in\Theta$.

Let us write $\wh p_N(y|\t)$ as $p(y|\t)e^z$ where $z=\log\;\wh p_N(y|\t)-\log\;p(y|\t)$ is a random variable
whose distribution is governed by the randomness occurring when estimating the likelihood $p(y|\t)$.
Let $g_N(z|\t)$ be the density of $z$. Assumption 1 implies that
\beqn
\E(e^z)=\int_\mathbb{R} e^zg_N(z|\t)dz=1.
\eeqn
We define
\beq\label{e:eobietgi1}
\pi_N(\t,z)=p(\t)g_N(z|\t)p(y|\t)e^z/p(y)
\eeq
as the joint density of $\t$ and $z$ on the extended space $\widetilde\Theta=\Theta\otimes\Bbb{R}$.
Then its first marginal is the posterior $\pi(\t)$ of interest, i.e.
\beq\label{e:key-relation}
\int_\mathbb{R}\pi_N(\t,z)dz=\pi(\t).
\eeq
We define the following sequence of interpolation densities on $\wt\Theta=\Theta\otimes\Bbb{R}$
\beq\label{e:eobietgi2}
\wt\xi_{a_t}(\t,z)=\frac{\wt\eta_{a_t}(\t,z)}{\int \wt\eta_{a_t}(\t,z)d\t dz},
\eeq
with
\beqn
\wt\eta_{a_t}(\t,z) = \pi_0(\t)^{1-a_t}[p(\t)p(y|\t)e^z]^{a_t}g_N(z|\t),\;\;t=0,...,T.
\eeqn
Note that $\wt\xi_{a_T}(\t,z)=\pi_N(\t,z)$, which is our new target density defined on $\wt\Theta$.
Algorithm 1 is entirely equivalent to the following procedure.

\paradot{Algorithm 1'}
\begin{enumerate}
\item Generate $(\t_i,z_i)\sim\wt\xi_{a_0}(\t,z)=\pi_0(\t)g_N(z|\t)$, i.e. generate $\t_i\sim\pi_0(\t)$ then $z_i\sim g_N(z|\t_i),\ i=1,...,M$.
Set $\wt W_i=1/M$.
\item For $t=1,...,T$
\begin{itemize}
\item[(i)] Weighting: compute the weights
\beqn
\wt w_i=\wt W_i\frac{\wt\eta_{a_t}(\t_i,z_i)}{\wt\eta_{a_{t-1}}(\t_i,z_i)}=\wt W_i[\pi_0(\t_i)]^{a_{t-1}-a_t}[p(\t_i)p(y|\t_i)e^{z_i}]^{a_t-a_{t-1}}.
\eeqn
and set the new normalized weights $\wt W_i={\wt w_i}/{\sum_{j=1}^{M}\wt w_j}$.
\item[(ii)] Resampling: If $\text{ESS}<\alpha M$, then resample from $\{\wt W_i,\t_i,z_i\}_{i=1}^{M}$, set $\wt W_i=1/M$ and denote the resamples by $\{\wt W_i,\t_i,z_i\}_{i=1}^{M}$.
\item[(iii)] Markov move: for each $i=1,...,M$, move the sample
$(\t^c,z^c)=(\t_i,z_i)$ by a Metropolis-Hastings kernel $K_{\wt\xi_{a_t}}(\cdot,\cdot)$
as follows. Generate a proposal $(\t^p,z^p)$ from the proposal density $\wt q_t(\t^p,z^p|\t^c,z^c)=q_t(\t^p|\t^c)g_N(z^p|\t_p)$.
Set $(\t_i,z_i)=(\t^p,z^p)$ with probability
\bean
\text{prob} &=& \min\left(1,\frac{\wt\eta_{a_t}(\t^p,z^p)\wt q_t(\t^c,z^c|\t^p,z^p)}{\wt\eta_{a_t}(\t^c,z^c)\wt q_t(\t^p,z^p|\t^c,z^c)}\right)\\
&=&\min\left(1,\frac{\pi_0(\t^p)^{1-a_t}[p(\t^p)\wh p_N(y|\t^p)]^{a_t}q_t(\t^c|\t^p)}{\pi_0(\t^c)^{1-a_t}[p(\t^c)\wh p_N(y|\t^c)]^{a_t}q_t(\t^p|\t^c)}\right).
\eean
\end{itemize}
\end{enumerate}
Phrased differently, generating weighted samples $\{\wt W_i,\t_i\}_{i=1}^{M}$ according to Algorithm 1 is equivalent to generating
weighted samples $\{\wt W_i,\t_i,z_i\}_{i=1}^{M}$ according to Algorithm 1'.
Algorithm 1' is exactly the SMC sampler of \cite{DelMoral:2006}
for sampling from the sequence $\wt\xi_{a_t}(\t,z),\ t=0,...,T$,
in which the backward kernel used is the backward kernel (30) in their paper.
Therefore, the weighted samples $\{\wt W_i,\t_i,z_i\}_{i=1}^{M}$ produced after the last iteration $T$
approximate $\wt\xi_{a_T}(\t,z)=\pi_N(\t,z)$, i.e.
\beqn
\sum_{i=1}^M \wt W_i\wt\varphi(\t_i,z_i)\stackrel{a.s.}{\longrightarrow} \int \wt\varphi(\t,z)\pi_N(\t,z)dzd\t,\;\;M\to\infty,
\eeqn
for any $\pi_N$-integrable function $\wt\varphi(\t,z)$ on $\wt\Theta$.
Given the function $\varphi(\t)$ in \eqref{eq:target integral},
we define the corresponding function $\wt\varphi$ on $\wt\Theta$ by $\wt\varphi(\t,z)=\varphi(\t)$,
then

\bean
\wh\varphi_\AISEL=\sum_{i=1}^M \wt W_i\varphi(\t_i)=\sum_{i=1}^M \wt W_i\wt\varphi(\t_i,z_i)&\stackrel{a.s.}{\longrightarrow}& \int_{\wt\Theta} \wt\varphi(\t,z)\pi_N(\t,z)dzd\t\\
&=&\int_{\Theta} \varphi(\t)\pi(\t)d\t\\
& =& \E_\pi(\varphi),
\eean
as $M\to\infty$.
This justifies Algorithm 1.
We refer to $\wh\varphi_\AISEL$ as the AISEL estimator of $\E_\pi(\varphi)$.

\paradot{Remark 1} It is advisable to perform the Markov move step over a few burn-in iterations
so that the samples move closer to the equilibrium distribution.

\paradot{Remark 2}
If $\varphi$ satisfies the conditions in Theorem 1 of \cite{Chopin:2004}, then
\beq
\sqrt{M}\Big(\wh\varphi_\AISEL-\E_\pi(\varphi)\Big)\stackrel{d}{\to}\N(0,\sigma^2_\AISEL(\varphi)) \;\;\text{as}\;\;M\to\infty,
\eeq
with the asymptotic variance $\sigma^2_\AISEL(\varphi)$ defined recursively as in \cite{Chopin:2004}.

Except for the special case in which
no resampling steps in Algorithm 1 are performed,
the asymptotic variance $\sigma^2_\AISEL(\varphi)$,
and therefore the variance of $\wh\varphi_\AISEL$,  does not admit a closed form.
A natural and potential technique to estimate $\Var(\wh\varphi_\AISEL)$
is to run Algorithm 1 in batches independently and in parallel.
Then, we have several independent batches of weighted samples $\{\wt W_i^{(r)},\t_i^{(r)}\}_{i=1}^{M_r}$, $r=1,...,R$ with $\sum_r M_r=M$,
and the corresponding $R$ independent estimates $\wh\varphi^{(r)}_\AISEL$ of $\E_\pi(\varphi)$.
The variance of the estimator $\wh\varphi_\AISEL$ can be estimated by
\beqn
\wh\Var(\wh\varphi_\AISEL)=\frac{1}{R}\sum_{r=1}^R(\wh\varphi^{(r)}_\AISEL-\overline{\wh\varphi})^2\;\;\text{with}\;\;\overline{\wh\varphi}=\frac{1}{R}\sum_{r=1}^R\wh\varphi^{(r)}_\AISEL.
\eeqn
If no resampling steps are performed, then the particles $\t_i$ are independent
and we have a closed form expression for estimating the asymptotic variance of $\wh\varphi_\AISEL$
\beq\label{e:ISvar_est1}
\wh{\sigma^2_{\AISEL}(\varphi)}={M\sum_{i=1}^{M}\big(\varphi(\t_i)-\wh\varphi_{\AISEL}\big)^2\wt W_i^2}.
\eeq
It is straightforward to show that this estimate is consistent.
However, it is important to perform resampling if necessary.

\subsection{Estimating the marginal likelihood}\label{Subsec:marginal likelihood}
Marginal likelihood \eqref{eq:llh}
is important for model comparison purposes.
Except for some trivial cases,
computing the marginal likelihood is challenging because of its integral form.
\cite{Friel:2008} propose a very efficient method,
called power posterior method, for estimating the marginal likelihood, that exploits the temparing sampling framework as in AIS.
This section extends the power posterior method to the case with latent variables. 

We consider for now a {\it continuous} sequence of interpolation densities \eqref{e:eobietgi2} as follows
\beqn
\wt\xi_s(\t,z)=\frac{\wt\eta_s(\t,z)}{\int \wt\eta_s(\t,z)d\t dz},\quad\text{with}\quad
\wt\eta_s(\t,z) = \pi_0(\t)^{1-s}[p(\t)p(y|\t)e^z]^{s}g_N(z|\t),\ 1\leq s\leq 1.
\eeqn
We have the following result. The proof is in the Appendix.
\begin{proposition}\label{proposition} Under Assumption 1, the log of the marginal likelihood $\log\;p(y)$ can be expressed as 
\beq\label{eq:marginal llh}
\log p(y) = \int_0^1\E_{(\t,z)\sim\wt\xi_s}\left[\log\frac{p(\t)p(y|\t)e^z}{\pi_0(\t)}\right]ds.
\eeq
\end{proposition}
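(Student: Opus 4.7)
The plan is to mimic the classical power-posterior (thermodynamic integration) identity of \cite{Friel:2008}, but on the augmented space $\wt\Theta=\Theta\otimes\mathbb{R}$ with the joint density $\pi_N(\theta,z)$. Let $Z(s)=\int \wt\eta_s(\theta,z)\,d\theta dz$ denote the normalizing constant of $\wt\xi_s$ as a function of the temperature $s\in[0,1]$. The strategy is to identify the endpoints $Z(0)$ and $Z(1)$, write $\log p(y)=\log Z(1)-\log Z(0)$ as the integral of $\tfrac{d}{ds}\log Z(s)$, and compute the derivative explicitly.

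First I would evaluate the endpoints. At $s=0$, $\wt\eta_0(\theta,z)=\pi_0(\theta)g_N(z|\theta)$ is a proper joint density, so $Z(0)=1$. At $s=1$,
\beqn
Z(1)=\int p(\theta)p(y|\theta)\Bigl(\int e^z g_N(z|\theta)dz\Bigr)d\theta = \int p(\theta)p(y|\theta)d\theta = p(y),
\eeqn
where the inner integral equals $1$ by Assumption~1 (which is exactly what makes this argument go through on the augmented space, even though tempered estimators are not themselves unbiased). Hence $\log p(y)=\int_0^1 \frac{d}{ds}\log Z(s)\,ds$.

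Next I would differentiate under the integral. Writing
\beqn
\log\wt\eta_s(\theta,z)=(1-s)\log\pi_0(\theta)+s\log\bigl[p(\theta)p(y|\theta)e^z\bigr]+\log g_N(z|\theta),
\eeqn
the $s$-derivative of the integrand is $\wt\eta_s(\theta,z)\cdot\log\frac{p(\theta)p(y|\theta)e^z}{\pi_0(\theta)}$. Exchanging differentiation and integration yields
\beqn
\frac{d}{ds}\log Z(s)=\frac{1}{Z(s)}\int \wt\eta_s(\theta,z)\log\frac{p(\theta)p(y|\theta)e^z}{\pi_0(\theta)}\,d\theta dz =\E_{(\theta,z)\sim\wt\xi_s}\!\left[\log\frac{p(\theta)p(y|\theta)e^z}{\pi_0(\theta)}\right].
\eeqn
Integrating this identity from $0$ to $1$ gives \eqref{eq:marginal llh}.

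The main obstacle is the interchange of differentiation and integration: one needs a dominating integrable function uniformly in $s$ on any compact subinterval of $[0,1]$. The natural sufficient condition is that $\E_{\wt\xi_s}\bigl|\log\{p(\theta)p(y|\theta)e^z/\pi_0(\theta)\}\bigr|$ is locally bounded in $s$; this follows from a mild moment assumption on $\log\pi_0$, $\log p$, $\log p(y|\theta)$, and $z$ under the tempered measure, which I would invoke as a regularity condition alongside Assumption~1. Everything else in the proof is bookkeeping; the one conceptual point worth emphasizing in the write-up is that unbiasedness of $\wh p_N(y|\theta)$ is used only at the endpoint $s=1$, not at intermediate temperatures, which is precisely why the augmentation by $z$ is essential.
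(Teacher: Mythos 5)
Your proof is correct and follows essentially the same route as the paper's: both define the normalizing constant $\zeta(s)=\int\wt\eta_s(\t,z)\,d\t\,dz$ of the tempered augmented density, identify $\zeta(0)=1$ and $\zeta(1)=p(y)$ via Assumption~1, and integrate $\tfrac{d}{ds}\log\zeta(s)$ over $[0,1]$ after differentiating under the integral sign. Your added remarks---the explicit domination condition justifying the interchange, and the observation that unbiasedness is needed only at $s=1$---are refinements the paper leaves implicit, but the argument is the same.
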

Let $f(s)$ denote the integrand in the right side of \eqref{eq:marginal llh}.
The scalar integral in \eqref{eq:marginal llh} can be deterministically approximated by
\beq\label{eq:hat_log_py}
\widehat{\log p(y)}=\sum_{t=0}^{T-1}(a_{t+1}-a_t)\frac{f(a_{t+1})+f(a_{t})}{2},
\eeq
with $\{a_t,\ t=0,...,T\}$ the annealing schedule as in Section \ref{Sec:AIS}.
The function $f(a_t)$ can be estimated by
\beq\label{eq:f_a_t}
\wh f(a_t)=\sum_{i=1}^M\wt W_i^{(t)}\log\frac{p(\t_i^{(t)})\wh p_N(y|\t_i^{(t)})}{\pi_0(\t_i^{(t)})}
\eeq
with $\{\wt W_i^{(t)},\t_i^{(t)}\}_{i=1}^M$ the output of Algorithm 1 after iteration $t$.

This approach of estimating the marginal likelihood
fits naturally to the AISEL procedure and is straightforward to implement.
Note that the values $\wh p_N(y|\t_i^{(t)})$ in \eqref{eq:f_a_t} 
can be used for the weighting step in iteration $t+1$ of Algorithm 1,
hence no extra computation is needed except calculation in \eqref{eq:hat_log_py} and \eqref{eq:f_a_t}.

\subsection{The effect of estimating the likelihood}\label{Subsec:2.1}
The efficiency of an IS procedure with proposal density $g$ and weights $w_i$ is often measured by
the effective sample size defined by (see, e.g. \cite{Neal:2001} and \citet[Chapter 2]{Liu:2001})
\beqn
\ESS = \frac{M}{1+\Var_{g}(w_i/\E_{g}[w_i])}=\frac{M}{1+\CV_g(w_i)},
\eeqn
where $\CV_g(w_i)=\Var_{g}(w_i)/(\E_{g}[w_i])^2$ is often called the coefficient of variation of the unnormalized weights $w_i$.
The bigger the ESS the more efficient the IS procedure.
This section investigates how much the ESS is reduced when working with an estimated likelihood.

We consider the case of the original AIS procedure, i.e. Algorithm 1 without the resampling step,
and work with the notation in Algorithm 1'.
Write $x=(\t,z)$ and $x^{(t)}=(\t^{(t)},z^{(t)})$. Without the resampling step, Algorithm 1' can be written as
\begin{itemize}
\item Generate $x^{(1)}=(\t^{(1)},z^{(1)})\sim\wt\xi_{a_0}(\t,z)=\pi_0(\t)g_N(z|\t)$.
\item For $t=1,...,T-1$, generate $x^{(t+1)}=(\t^{(t+1)},z^{(t+1)})$ from the Markov kernel $K_{\wt\xi_{a_t}}(x^{(t)},\cdot)$.
\item Set $(\t_i,z_i)=(\t^{(T)},z^{(T)})$ and compute the corresponding unnormalized weight
\beqn
\wt w_i= \frac{\wt\eta_{a_1}(x^{(1)})}{\wt\eta_{a_0}(x^{(1)})}\times \frac{\wt\eta_{a_2}(x^{(2)})}{\wt\eta_{a_1}(x^{(2)})}\times...\times
\frac{\wt\eta_{a_T}(x^{(T)})}{\wt\eta_{a_{T-1}}(x^{(T)})}.
\eeqn
\end{itemize}
Denote by $\ESS_\AIS$ and $\ESS_\AISEL$ the effective sample sizes of the AIS procedures when the likelihood is given and when it is estimated, respectively.

We make the following assumption which is satisfied in almost cases.

\paradot{Assumption 2}
There exists a function $\lambda(\t)$ such that for each $\t\in\Theta$, $\lambda^2(\t)<\infty$ and
\beqn
\sqrt{N}\left(\wh p_N(y|\t)-p(y|\t)\right)\stackrel{d}{\to}\N(0,\lambda^2(\t))\;\;\text{as}\;\;N\to\infty.
\eeqn
The following lemma follows immediately from Assumption 2 using the second order $\delta$-method.
\begin{lemma}\label{l:CLT} Let $\gamma^2(\t)=\lambda^2(\t)/p(y|\t)^2$,
and suppose that Assumption 2 holds. Let $z\sim g_N(z|\t)$. Then
\beqn
\sqrt{N}\left(\frac{z+\frac{\gamma^2(\t)}{2N}}{\gamma(\t)}\right)\stackrel{d}{\to}\N(0,1) \;\;\text{as}\;\;N\to\infty.
\eeqn
\end{lemma}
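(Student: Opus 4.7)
The plan is to linearize the difference $z=\log\wh p_N(y|\t)-\log p(y|\t)$ by a second-order Taylor expansion of $\log$, show that the quadratic correction term exactly cancels the deterministic shift $\gamma^2(\t)/(2N)$ up to $o_P(N^{-1/2})$, and then invoke Slutsky's theorem.

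First I would set $\varepsilon_N:=\wh p_N(y|\t)/p(y|\t)-1$, so that $z=\log(1+\varepsilon_N)$. Assumption~2, combined with the continuous mapping theorem applied to division by the constant $p(y|\t)$, yields
\beqn
\sqrt{N}\,\varepsilon_N \stackrel{d}{\to} \N\!\left(0,\,\gamma^2(\t)\right),
\eeqn
since $\gamma^2(\t)=\lambda^2(\t)/p(y|\t)^2$. In particular $\varepsilon_N = O_P(N^{-1/2})$, hence $\varepsilon_N\to 0$ in probability, so for $N$ large enough the Taylor expansion
\beqn
\log(1+\varepsilon_N)=\varepsilon_N-\tfrac{1}{2}\varepsilon_N^2+R_N
\eeqn
is valid with $R_N=O_P(\varepsilon_N^3)=O_P(N^{-3/2})$.

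Next I would multiply through by $\sqrt{N}$ and analyze each piece. Writing
\beqn
\sqrt{N}\,z=\sqrt{N}\,\varepsilon_N-\frac{1}{2\sqrt{N}}\bigl(\sqrt{N}\,\varepsilon_N\bigr)^2+\sqrt{N}\,R_N,
\eeqn
the second term is $o_P(1)$ because $\bigl(\sqrt{N}\,\varepsilon_N\bigr)^2=O_P(1)$ by the continuous mapping theorem applied to the squaring map, and the remainder term is $O_P(N^{-1})=o_P(1)$. Slutsky's theorem therefore gives $\sqrt{N}\,z\stackrel{d}{\to}\N(0,\gamma^2(\t))$.

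Finally, the purely deterministic adjustment satisfies $\sqrt{N}\cdot\gamma^2(\t)/(2N)=\gamma^2(\t)/(2\sqrt{N})\to 0$, so one more application of Slutsky yields
\beqn
\sqrt{N}\left(\frac{z+\gamma^2(\t)/(2N)}{\gamma(\t)}\right)
=\frac{\sqrt{N}\,z}{\gamma(\t)}+\frac{\gamma(\t)}{2\sqrt{N}}\stackrel{d}{\to}\N(0,1),
\eeqn
as claimed. There is no real obstacle here; the only step requiring some care is justifying that the quadratic term in the Taylor expansion is genuinely $o_P(1)$ after multiplication by $\sqrt{N}$, which is immediate from the tightness of $\sqrt{N}\,\varepsilon_N$ guaranteed by weak convergence. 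The statement is therefore essentially a textbook second-order delta-method computation tailored to the log transform.
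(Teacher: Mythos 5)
Your proof is correct and is exactly the argument the paper has in mind: the paper gives no written proof, simply asserting that the lemma ``follows immediately from Assumption 2 using the second order $\delta$-method,'' and your Taylor expansion of $\log(1+\varepsilon_N)$ with the $o_P(1)$ quadratic term and the vanishing deterministic shift $\gamma^2(\t)/(2\sqrt{N})$ is precisely that computation spelled out. The only (standard, harmless) imprecision is that the expansion is valid on an event of probability tending to one rather than ``for $N$ large enough.''
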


Following \cite{Pitt:2012}, we make the following further assumptions.

\paradot{Assumption 3} (i) The density $g_N(z|\t)$ of $z$ is $\N(-\frac{\gamma^2(\t)}{2N},\frac{\gamma^2(\t)}{N})$.
(ii) For a given $\sigma^2>0$, let $N$ be a function of $\t$ and $\s^2$ such that
$\text{Var}(z)\equiv\s^2$, i.e. $N=N_{\s^2}(\t)=\gamma^2(\t)/\s^2$.

Assumption 3(i) is justified by Lemma \ref{l:CLT}.
Assumption 3(ii) keeps the variance $\Var(z)$ constant across different values of $\t$, thus
making it easy to associate the ESS with $\s$.
Under Assumption 3, the density $g_N(z|\t)$ depends only on $\s$
and is denoted by $g(z|\s)$.

\paradot{Assumption 4} $K_{\xi_{a_t}}(\t,\cdot)=\xi_{a_t}(\cdot)$ and $K_{\wt\xi_{a_t}}(x,\cdot)=\wt\xi_{a_t}(\cdot)$.

As in \cite{Neal:2001}, this assumption
separates out the effect related to Markov chain convergence
and allows us to study the effect of estimating the likelihood on the sequential sampling scheme.

\begin{theorem}\label{the:theorem ESS}
Suppose that Assumptions 1-4 hold.
Then
\beq\label{eq:ESS}
\frac{\ESS_\AISEL}{\ESS_\AIS}=\exp\left(-\tau\s^2\right),
\eeq
with $\tau=\sum_{t=1}^T(a_t-a_{t-1})(2a_t-1)>0$ for any sequence $0=a_0<a_1<...<a_T=1$.
\end{theorem}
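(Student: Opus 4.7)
The plan is to exploit the factorization of the importance weights under Assumption 4, compute second moments one step at a time, and track how the normalizing constants shift when we move from the AIS integrand to the AISEL integrand on the augmented space.

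First, I would rewrite the ESS ratio as
\beqn
\frac{\ESS_{\AISEL}}{\ESS_{\AIS}} \;=\; \frac{1+\CV(w)}{1+\CV(\wt w)} \;=\; \frac{\E[w^2]/\E[w]^2}{\E[\wt w^2]/\E[\wt w]^2}.
\eeqn
Under Assumption 4, the Markov kernels sample exactly from the invariant distributions, so each $\t^{(t)}$ in the AIS chain is an independent draw from $\xi_{a_{t-1}}$, and each augmented $x^{(t)}=(\t^{(t)},z^{(t)})$ in the AISEL chain is an independent draw from $\wt\xi_{a_{t-1}}$. The weights $w$ and $\wt w$ therefore factor as products of independent ratios $r_t$ and $\wt r_t$, and the ESS ratio further factors as $\prod_{t=1}^T \frac{\E[r_t^2]\E[\wt r_t]^2}{\E[r_t]^2\E[\wt r_t^2]}$.

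Second, for a generic exponent $s\in\SetR$, let $Z_s=\int \pi_0(\t)^{1-s}[p(\t)p(y|\t)]^s d\t$ and $\wt Z_s=\int\wt\eta_s(\t,z)d\t dz$. A direct integration gives $\E[r_t]=Z_{a_t}/Z_{a_{t-1}}$ and $\E[r_t^2]=Z_{b_t}/Z_{a_{t-1}}$ with $b_t:=2a_t-a_{t-1}$, and similarly for $\wt r_t$ with $Z$ replaced by $\wt Z$. The key computation uses Assumption 3(i): since $z\sim\N(-\sigma^2/2,\sigma^2)$,
\beqn
\int e^{sz}g(z|\s)dz \;=\; \exp\!\bigl(\tfrac{1}{2}\s^2 s(s-1)\bigr)
\eeqn
for any $s$, so that $\wt Z_s = Z_s\exp\!\bigl(\tfrac12\s^2 s(s-1)\bigr)$. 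The same Gaussian MGF, applied with $s=b_t$, handles the integral $\int \wt\eta_{a_t}^2/\wt\eta_{a_{t-1}}$ that appears in $\E[\wt r_t^2]$.

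Third, I would substitute these expressions to get that the per-step ratio of second-moment-to-squared-mean is
\beqn
\frac{\E[\wt r_t^2]/\E[\wt r_t]^2}{\E[r_t^2]/\E[r_t]^2}
\;=\;\exp\!\Bigl(\tfrac{1}{2}\s^2\bigl[b_t(b_t-1)+a_{t-1}(a_{t-1}-1)-2a_t(a_t-1)\bigr]\Bigr).
\eeqn
The main (purely algebraic) work is then showing that the bracketed quantity equals $2(a_t-a_{t-1})^2$; expanding $b_t=2a_t-a_{t-1}$ and collecting terms does it in a few lines. Multiplying across $t$ yields
\beqn
\frac{\ESS_{\AISEL}}{\ESS_{\AIS}}=\exp\!\Bigl(-\s^2\sum_{t=1}^T (a_t-a_{t-1})^2\Bigr).
\eeqn

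Finally, I would match the prescribed form of $\tau$ via the telescoping identity $(a_t-a_{t-1})(2a_t-1)=(a_t^2-a_{t-1}^2)+(a_t-a_{t-1})^2-(a_t-a_{t-1})$: summing, the first and third sums both collapse to $1$, leaving $\tau=\sum_t(a_t-a_{t-1})^2$, which is manifestly positive. The only step that requires real care is the algebraic simplification in the exponent; the rest is bookkeeping with independent draws and Gaussian moment generating functions.
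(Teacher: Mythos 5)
Your proof is correct, and it rests on the same two pillars as the paper's own argument: Assumption 4 turns the chains into independent draws $\t^{(t)}\sim\xi_{a_{t-1}}$ (resp.\ $x^{(t)}\sim\wt\xi_{a_{t-1}}$), and Assumption 3 reduces every $z$-integral to a Gaussian moment generating function. The organization differs, however. The paper works with the whole weight at once: it writes $\log\wt w=\log w+\sum_t(a_t-a_{t-1})z^{(t)}$, shows that under the proposal the $z^{(t)}$ are independent of $\wt\t$ with tilted densities $g_{t-1}(z)\propto e^{a_{t-1}z}g(z|\s)$, evaluates $\E_{\wt g}[\wt w^2]=\E_g[w^2]\prod_t\int e^{2(a_t-a_{t-1})z}g_{t-1}(z)\,dz$, and invokes $\E_{\wt g}[\wt w]=\E_g[w]$ from IS unbiasedness. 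You instead factor the coefficient of variation step by step and do the bookkeeping through the normalizing constants $Z_s$ and $\wt Z_s=Z_s\exp\bigl(\tfrac12\s^2 s(s-1)\bigr)$, using $\eta_{a_t}^2/\eta_{a_{t-1}}=\eta_{2a_t-a_{t-1}}$; this makes the telescoping of the means automatic and needs no separate appeal to unbiasedness. The two computations are algebraically equivalent: your bracketed quantity does simplify to $2(a_t-a_{t-1})^2$, and the paper's per-step exponent $(a_t-a_{t-1})(2a_t-1)$ sums to the same total. Where you genuinely improve on the paper is the positivity claim: the paper proves $\tau>0$ by an AM--GM estimate, whereas your telescoping identity gives the exact form $\tau=\sum_t(a_t-a_{t-1})^2$, which makes positivity immediate and, as a bonus, shows via Cauchy--Schwarz that $\tau\ge 1/T$ with equality exactly for the uniform schedule $a_t=t/T$ --- a sharper statement than the paper's remark following the theorem.
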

The theorem, whose proof is in the Appendix, shows that the efficiency is reduced by the factor $\exp(\tau\s^2)$ when working with an estimated likelihood.
If $a_t=1/T$, then $\tau=1/T$ for all $t$, and the theorem shows that increasing $T$ and thus making  the $\wt\xi_{a_t}$
closer to each other helps improve efficiency.

In the $\IS^2$ approach, \cite{Tran:2013} show that
$\ESS_{\IS^2}/\ESS_\IS=\exp(-\s^2)$,
where $\ESS_{\IS^2}$ and $\ESS_\IS$ are the effective sample sizes
of IS when the likelihood is estimated and given, respectively.
Similarly in particle MCMC, \cite{Pitt:2012} show that efficiency is reduced by a factor of  approximately $\exp(-\s^2)$
when working with an estimated likelihood.
We can see that
the accuracy of likelihood estimation is less important in AISEL 
than in $\IS^2$ and particle MCMC, because the factor $\exp(-\tau\s^2)$
can be made small when $\tau$ is decreased.
This means that AISEL can be more robust than $\IS^2$ and particle MCMC in cases
where we only have a rough estimate of the likelihood, or it is expensive to obtain an accurate estimate of the likelihood.

\subsection{Practical guidelines on selecting the number of particles}
This section studies how to select the number of particles $N$ optimally.
A large number of particles $N$ results in a precise likelihood estimate,
and therefore an accurate estimate of $\E_\pi(\varphi)$, but at a greater computational cost.
A small $N$ leads to a large variance of the likelihood estimator,
so we need a larger number of importance samples $M$ in order to obtain the desired accuracy of the AISEL estimator.
In either case, the computation is expensive.
It is important to select an optimal value of $N$ that minimizes the computational cost.

The time to compute the likelihood estimate $\wh p_N(y|\t)$ can be written
as $\tau_0+N(\t)\tau_1$ where $\tau_0\geq 0$ and $\tau_1>0$ \citep{Tran:2013}.
For example, if $\wh p_N(y|\t)$ is estimated by IS,
then $\tau_0$ is the overhead cost spent on estimating the proposal density
and $\tau_1$ is the computing time used to generate each sample and compute the weight.
Note that under Assumption 3, $N$ depends on $\t$ as $N=N_{\s^2}(\t)=\gamma^2(\t)/\s^2$.

The variance of the AISEL estimator is approximated as
\beqn
\Var(\wh\varphi_\AISEL)\approx\frac{\Var_\pi(\varphi)}{\ESS_\AISEL},
\eeqn
with $\Var_\pi(\varphi)=\E_\pi(\varphi-\E_\pi(\varphi))^2$.
From \eqref{eq:ESS},
\beq\label{eq:Var_AIS}
\Var(\wh\varphi_\AISEL)\approx\frac1M\Var_\pi(\varphi)(1+\CV_g(w_i))\exp(\tau\s^2).
\eeq
Let $P^*$ be a prespecified precision. Then we need approximately
\beqn
M(P^*) = \frac1{P^*}\Var_\pi(\varphi)(1+\CV_g(w_i))\exp(\tau\s^2)
\eeqn
particles in order to have that precision.
The required computing time to run AISEL is
\beq\label{eq:time}
\sum_{t=1}^T\sum_{i=1}^M(N(\t_i^{(t)})\tau_1+\tau_0)\approx TM\left(\frac{\bar\gamma^2}{\s^2}\tau_1+\tau_0\right)=
\frac{T}{P^*}\Var_\pi(\varphi)(1+\CV_g(w_i))\exp(\tau\s^2)\left(\frac{\bar\gamma^2}{\s^2}\tau_1+\tau_0\right),
\eeq
in which
\beqn
\frac{1}{TM}\sum_{t=1}^T\sum_{i=1}^M\gamma^2(\t_i^{(t)})\longrightarrow\bar\gamma^2=\frac1T\sum_{t=1}^T\E_{\xi_{a_t}}[\gamma^2(\t)],\;\; M\to\infty.
\eeqn
Therefore
\beq\label{eq:CT*}
\CT^*(\s^2)=\exp(\tau\s^2)\times \left(\frac{\bar\gamma^2}{\s^2}\tau_1+\tau_0\right)
\eeq
characterizes the computing as a function of $\s^2$,
which is minimized at
\beq\label{eq:sigsq_opt}
\s^2_\text{opt} = \begin{cases}
\frac{\sqrt{(\bar\gamma^2\tau\tau_1)^2+4\bar\gamma^2\tau\tau_0\tau_1}-\bar\gamma^2\tau\tau_1}{2\tau\tau_0},&\;\;\tau_0>0\\
1/\tau,&\;\;\tau_0=0.
\end{cases}
\eeq
The optimal number of particles $N$ is such that
\beqn
\Var_{N,\t}(z)=\Var(\log\wh p_{N}(y|\t))=\s^2_\text{opt}.
\eeqn

Let $\wh\Var_{N,\t}(z)$ be an estimate of $\Var_{N,\t}(z)$,
which can be obtained by using, e.g., the delta method
or the jackknife. See \cite{Tran:2013} for more details.
We suggest the following practical guidelines for tuning the optimal number of particles $N$.
Note that $N$ generally depends on $\t$ but this dependence is suppressed for notational simplicity.

\noindent{\bf The case $\tau_0=0$}. From \eqref{eq:CT*}, $\s^2_\text{opt}=1/\tau$. It is necessary to tune $N$ such that $\wh\Var_{N,\t}(z)=1/\tau$.
A simple strategy is to start with some small $N$ and increase it if $\wh\Var_{N,\t}(z)>1/\tau$.

\noindent{\bf The case $\tau_0>0$}. First, we need to estimate $\bar\gamma^2$.
Let $\{\t_1,...,\t_J\}$ be a few initial draws from the initial density $\xi_0(\t)$.
Then, start with some large $N_0$, $\bar\gamma^2$ can be initially estimated by
\beq\label{eq:gamma^2}
\wh{\bar\gamma}^2 = \frac{1}{J}\sum_{j=1}^J\wh\gamma^2(\t_j)=\frac{N_0}{J}\sum_{j=1}^J\wh\Var_{N_0,\t_j}(z),
\eeq
as $\wh\Var_{N_0,\t_j}(z)=\wh\gamma^2(\t_j)/N_0$.
By substituting this estimate of ${\bar\gamma}^2$ into \eqref{eq:sigsq_opt} we obtain an estimate $\wh{\s}^2_\text{opt}$ of ${\s}^2_\text{opt}$.
We now can start Algorithm 1 and update $\wh{\bar\gamma}^2$ (and therefore $\wh{\s}^2_\text{opt}$) as we go.
For each draw of $\t$, we start with some small $N$ and increase $N$ if $\wh\Var_{N,\t}(z)>\wh{\s}^2_\text{opt}$.

\paradot{Time normalized variance}
In the examples in Section \ref{Sec:Applications}
we use the time normalized variance (TNV) as a measure of efficiency \citep{Tran:2013} of a sampling procedure.
The TNV of the AISEL estimator $\wh\varphi_\AISEL$ is defined as
\beq\label{eq:TNV}
\text{TNV}(M,N) = \Var(\wh\varphi_\AISEL)\times\tau(M,N),
\eeq
where $\tau(M,N)$ is the total CPU time used to run the AISEL procedure with $M$ importance samples and $N$ particles.
From \eqref{eq:Var_AIS} and \eqref{eq:time},
\bean
\text{TNV}(M,N) &\approx& T\Var_\pi(\varphi)(1+\CV_g(w_i))\exp(\tau\s^2)\left(\frac{\bar\gamma^2}{\s^2}\tau_1+\tau_0\right)
\eean
and is propotional to $\CT^*(\s^2)$.

\paradot{Remark 3} Letting the optimal $N_\opt=N_\opt(\theta)$ depend on $\theta$
is theoretically interesting but might in some cases be ultimately inefficient.
The reason is that extra computing time is needed to tune $N$ for each $\t$.
A simple strategy is to make approximation that $\gamma^2(\t)\approx {\bar\gamma}^2$.
Then, the optimal number of particles,
which is constant across $\t$, is determined as
\beq\label{eq:N_opt}
N_\text{opt} = \begin{cases}
\frac{2\tau\tau_0}{\sqrt{(\tau\tau_1)^2+4\tau\tau_0\tau_1/\bar\gamma^2}-\bar\gamma^2\tau\tau_1},&\;\;\tau_0>0\\
\tau \bar\gamma^2,&\;\;\tau_0=0.
\end{cases}
\eeq
We follow this strategy in the examples below.

\section{Examples}\label{Sec:Applications}
\subsection{A simulation example}\label{sec:simul ex}
We generate a dataset from a mixed logistic regression model
\beqn
P(y_{ij}=1|\b,\eta_i) = \frac{\exp(\b_0+x_{ij}'\beta+\eta_{i0}+z_{ij}\eta_{i1})}{1+\exp(\b_0+x_{ij}'\beta+\eta_{i0}+z_{ij}\eta_{i1})},\;\;j=1,...,n_i,\; i=1,...,m
\eeqn
in which the random effects $\eta_i=(\eta_{i0},\eta_{i1})'\sim N(0,\Sigma)$ and $\Sigma=\diag(\s_1^2,\s_2^2)$.
Covariates are generated from the uniform distribution $U(0,1)$.
We set $\b_0=-3$, $\beta=(2,\ -2,\ 2)'$, $\s_1^2=2$, $\s_2^2=1$, $n_i=10$ and $m=50$.
We use a normal prior $N(0,100I)$ for $\beta$ and $p(\s_k^2)\propto 1/\s_k^2$, $k=1,2$.
Algorithm 1 is used to estimate the posterior mean of the parameters $\t=(\b_0,\b,\s_1^2,\s_2^2)$.
The sequence $a_t$ is set as $1/T$ with $T=10$.
The initial distribution $\pi_0$ is a multivariate $t$ with mean $(0,\ 0,\ 0,\ 0,\ 1,\ 2)$,
scale matrix $3I_6$ and degrees of freedoms $10$.

We first run Algorithm 1 to generate $M=100$ importance samples with $N=10$,
and obtain $\tau_0=7.2\times10^{-3},\tau_1=5.9\times10^{-4}$ and $\bar\gamma^2=17.7$.
This gives $\s_\opt^2=2.6$ and the optimal number of particles (as constant across $\t$) $N_\opt=7$.
We then run Algorithm 1 to generate $M=5000$ importance samples for five values of $N$, $N=1, 7, 10, 20$ and $N=50$.
In order to be able to estimate the variance of the estimator,
Algorithm 1 is run in parallel as described in Remark 2 with $R=20$ batches.

Figure \ref{TNVvsN} plots the time normalized variance in \eqref{eq:TNV} versus $N$.
The TNV is averaged over the posterior mean estimates of the four parameters $\b_0$, $\b$, $\s_1^2$ and $\s_2^2$.
The TNV appears to be minimized at $N=10$, close to the theoretical optimal value $N=7$.
The results suggests that the TNV is weakly sensitive around the optimal value of $N$.
The efficiency decreases linearly when $N$ is higher than the optimal value, whereas
the efficiency can deteriorate exponentially when $N$ is below the optimal.
This phenomenon is also observed in the IS$^2$ method \citep{Tran:2013}.
In practice, it is therefore advisable to use for $N$ a value which is slightly bigger than $N_\opt$.

Table \ref{table1} reports the estimate of the posterior mean when using $N=10$ particles.

\begin{figure*}[h]
\centerline{\includegraphics[width=.5\textwidth,height=.35\textwidth]{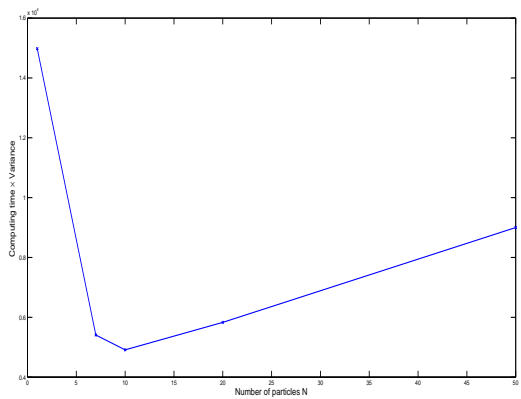}}
\caption{\label{TNVvsN}
Plot of the time normalized variance vs $N$}
\end{figure*}

\begin{table}[h]
\centering
\vskip2mm
{\small
\begin{tabular}{cccc}
\hline\hline 
	   & True & Mean 	& Std. Dev\\
$\beta_0$  &-3	  & -3.08	&0.40\\
$\b_1$	   &2	  &1.97		&0.08  	\\
$\b_2$	   &-2	  &-1.99	&0.05	\\
$\b_3$	   &2	  &2.02		&0.08\\
$\s_1^2$   &2	  &2.15		&0.68\\ 	 	
$\s_2^2$   &1	  &0.62		&0.07\\ 	 	
\hline\hline
\end{tabular}
}
\caption{Estimate of the posterior mean and the standard deviation}\label{table1}
\end{table}

\subsection{Real Data Example: The SV model}
We analyze the Pound/Dollar data set (\cite{KimShephardChib1998}) using both
the standard SV model and the SV model with leverage effect. For the standard SV
model the measurement equation, for the $t^{th}$ observation $y_t$,
is given by
\beq \label{eq:measurement_SV}
y_t = \exp\left(\frac{h_t}{2}\right)\varepsilon_t;\ t=1,2,\dots,n,
\eeq
where $\varepsilon_t$ follows the standard normal distribution, and $h_t$ is a
latent variable defined for $t=1,2,\dots,n-1$, as
\beq \label{eq:state_SV}
h_{t+1} = \mu\left(1-\phi\right) + \phi h_t + \sigma_{\eta} \eta_t,
\eeq
where $\mu$ is the unconditional mean, $\phi$ is the level
of persistence, $\sigma_{\eta}$ is the scaling parameter, for the latent
process and $\eta_t$ follows the standard normal distribution. The SV model given in \eqref{eq:measurement_SV} and
\eqref{eq:state_SV} is completed by the initial state, given as
\beq \label{eq:initial_state}
h_1\sim N\left(m_1,v_1\right).
\eeq

For the analysis we assume, \emph{a priori}, that $\mu$ is normally distributed,
such that $\mu\sim N(0, 100)$, we assume that the log of variance, $h_t$, is
generated by a stationary process with positive autocorrelation, and where the
persistence parameter,
$\phi$, follows a Beta distribution
where $\phi \sim Be(15, 1.5)$ and we assume $\sigma_\eta$ has an inverted gamma
prior $IG(10, 0.1)$.

To implement Algorithm 1, we use the bootstrap particle filter of
\cite{GordonSalmondSmith1993} to obtain an
unbiased estimate of the log-likelihood. For this specific implementation
$\tau_0=0$, for which \eqref{eq:sigsq_opt} implies that the variance of the
estimated log-likelihood should be $\frac{1}{\tau}$ for the annealed IS scheme that
we implement. In the analysis of the SV model we set $T=15$.
To estimate the optimal number particles, we compute the
log-likelihood one thousand times, evaluated at parameter values that are
typical for the SV model. Specifically, we set $\mu=-0.6$, $\phi=0.98$ and we
set $\sigma_{\eta}=0.16$. This leads us to the conclusion that 24 is the optimal
number of particles for this model and data set. Algorithm 1 also requires us to
set $M$, which we set to $M=1000$.

We also need to specify step (iii) of Algorithm 1, which is the Markov move
step. Specifically, we employ 5 random walk Metropolis Hastings (RWMH) steps for each
Markov move; see \cite{RobertCassela1999} for further details on the RWMH algorithm.
Similarly, to \cite{Chopin2002} (in the sequential Monte Carlo context),
we take the covariance of the RWMH
algorithm as the covariance of the current set of (annealed IS) particles. The
covariance is scaled by the parameter $\alpha$, where $\alpha$ is adjusted at
each move step, based on the acceptance rate of the previous move step.
Specifically, at each move step we update the scale parameter, $\alpha$, such
that

\beq \label{eq:mult_factor}
\alpha \leftarrow MF \times \alpha,
\eeq
where $MF$ denotes a multiplication factor that scales $\alpha$ from the previous
period. The multiplication factor is determined by the acceptance rate ($AR$) from the
previous Markov move step.

\begin{table}[h]
\centering
\vskip2mm
{\small
\begin{tabular}{c|cccccccccc}
\hline 
Range $AR$ & $[0,0.01)$ & $[0.01, 0.1)$ & $[0.1, 0.15)$ & $[0.15, 0.2)$ & $[0.2, 0.23)$\\
$MF$ & 0.2 & 0.5 & 0.7 & 0.9 & 0.99\\
\hline
Range $AR$	   & $[0.23, 0.25)$ &$[0.25, 0.5)$ & $[0.5, 0.85)$ & $[0.85, 0.99)$&  $[0.99, 1.0)$\\
$MF$ & 1 & 1/0.97 & 1/ 0.8 & 1/0.7 & 1/0.5\\
\hline
\end{tabular}
}
\caption{reports the value of the multiplication factor, $MF$, given the
acceptance rate of the previous Markov move step}
\end{table}

We report in Table \ref{MultFactor} the multiplication factor used in
\eqref{eq:mult_factor}, given the value of the acceptance rate from the previous
period. We find adapting in this fashion works well in all the examples we have
considered so far, including the ones we consider in this paper.

\begin{figure*}[h]
\centerline{\includegraphics[width=.5\textwidth,height=.35\textwidth]{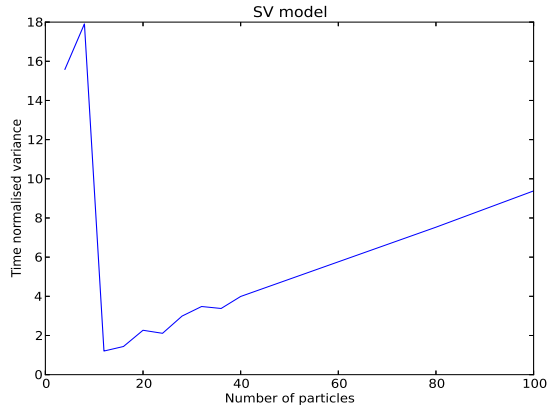}}
\caption{
Plot of the time normalized variance vs $N$ for the SV model}\label{MultFactor}
\end{figure*}

The output for Figure \ref{MultFactor} is produced by running Algorithm 1,
using the method described in Remark 3 with $R=200$ batches. For this part of
the analysis
here we set $a_t=t/n$. As in the simulated example,
the theoretically optimal number of simulated particles, is close the what
is empirically observed as optimal. We also observe that the penalty for using
too few particles, with respect to the time corrected measure of accuracy, can
be much greater than using too many particles. 

\begin{table}[h]
\centering
\vskip2mm
{\small
\begin{tabular}{ccc}
\hline 
    & Mean & Std. Dev\\
$\mu$ & -0.66  & 0.40 \\
$\phi$ & 0.98 & 0.02\\
$\sigma_\eta$ & 0.17 & 0.04\\
\hline
log ML & -19 & \\
\hline
\end{tabular}
}
\caption{Reports output from the analysis of the SV model on the Pound/Dollar
data set.} \label{PoundDollar}
\end{table}

We use the annealed IS algorithm for estimation of the
SV model on the Pound/Dollar data set.  Here we set $a_t=(t/n)^3$. This ensures
that we move away from the prior very slowly initially, and ensures that the
effective sample size is high as we move over the entire extended state space.
This is important when estimating the marginal likelihood as we require accurate
estimates of the expected value of the log-likelihood across the entire
temperature range, and not just at the end of the estimation process, which is
the case for parameter estimates. The results are reported in Table
\ref{PoundDollar}.  We report both the posterior mean and
standard deviation for each of the parameters. The time taken for estimation is
176 seconds, using the Julia programming language (\cite{JULIA2012}), on a Core
i7 Linux box, with 16 Gigabytes of RAM. Note the code has not been parallelized,
so even better performance could be achieved, with a more highly optimized
implementation.

Unlike the standard SV model, in \eqref{eq:measurement_SV}, \eqref{eq:state_SV}
and \eqref{eq:initial_state}, in which the measurement and state disturbances
are independent, the SV model with leverage effect allows for correlation
between $\varepsilon_t$ and $\eta_t$; see \cite{OmoriChibShephardNakajima2007},
for further details.  Specifically, it is assumed that
\bean
\left(
\begin{array}{c}
\varepsilon_{t}\\
\eta_{t}
\end{array}
\right)\sim N\left(0, \left(\begin{array}{cc}
		1 & \rho \\
		\rho & 1
		\end{array}\right)\right),
\eean
where $\left|\rho\right|<1$. As for the standard SV model we use the bootstrap
particle filter to obtain an estimate of the log-likelihood.   We set the number
of particles to 20, which corresponds to our estimate of the optimal number of
particles. The other algorithmic parameters remain the same as for the standard
SV model.

\begin{table}[h]
\centering
\vskip2mm
{\small
\begin{tabular}{ccc}
\hline 
    & Mean & Std. Dev\\
$\mu$ & -0.71  & 0.26 \\
$\phi$ & 0.98  & 0.02 \\
$\sigma_\eta$ & 0.17  & 0.04\\
$\rho$ & -0.04 & 0.12 \\
\hline
log ML & -73 & \\
\hline
\end{tabular}
}
\caption{Reports output from the analysis of the SV model with leverage on the Pound/Dollar
data set.} \label{SV_lev}
\end{table}

Output from the analysis of the Pound/Dollar data set using the SV model with
leverage is reported in Table \ref{SV_lev}.  The analysis took 172 seconds on
using the Julia programming language (\cite{JULIA2012}), on a Core
i7 Linux box, with 16 Gigabytes of RAM. There is little evidence of leverage
effect, for the Pound/Dollar data set, based on this analysis. In particular,
the log marginal likelihood strongly favours the standard SV model, which isn't
surprising given that the estimate of $\rho$ is close to zero.

\section{Conclusions}\label{Sec:conclusion}
We have presented the annealed IS algorithm
for Bayesian inference in models with latent variables.
The proposed AISEL method can be considered as a supplement to existing
Monte Carlo methods for latent variable models, including particle MCMC \citep{Beaumont:2003,Andrieu:2009,Andrieu:2010},
SMC$^2$ \citep{Chopin:2012} and IS$^2$ \citep{Tran:2013}.
The theory and methodology presented in this paper are useful
for Bayesian inference in latent variable models where the posterior distribution is multimodal
and choosing an appropriate proposal density is challenging. An estimate of the
log marginal likelihood is obtained as a byproduct of the estimation procedure.

\section*{Acknowledgment}
The research was partially supported by Australian Research Council grant DP0667069.

\section*{Appendix}
\begin{proof}[Proof of Theorem \ref{the:theorem ESS}]
As in Section 2, we can see that the AISEL algorithm is an IS procedure operating on the extended space $(\Theta\otimes\Bbb{R})^T$
with the proposal density
\beqn
\wt g(x^{(1)},...,x^{(T)})=\wt\xi_{a_0}(x^{(1)})K_{\wt\xi_{a_1}}(x^{(1)},x^{(2)})...K_{\wt\xi_{a_{T-1}}}(x^{(T-1)},x^{(T)}).
\eeqn
Note that $\wt\eta_{a_t}(\t,z)=\eta_{a_t}(\t)\exp(a_tz)g_N(z|\t)$.
We have
\bea\label{eq:log w}
\log(\wt w_i)&=&\sum_{t=1}^T\left(\log\wt\eta_{a_t}(x^{(t)})-\log\wt\eta_{a_{t-1}}(x^{(t)})\right)\notag\\
&=&\sum_{t=1}^T\left(\log\eta_{a_t}(\t^{(t)})-\log\eta_{a_{t-1}}(\t^{(t)})\right)+\sum_{t=1}^T(a_t-a_{t-1})z_t\notag\\
&=&\log(w_i)+\sum_{t=1}^T(a_t-a_{t-1})z_t.
\eea
Denote $\wt\t=(\t^{(1)},...,\t^{(T)})$ and $\wt z=(z^{(1)},...,z^{(T)})$.
Under Assumption 4,
\beqn
g_\t(\wt\t)=g_\t(\t^{(1)},...,\t^{(T)})=\xi_{a_0}(\t^{(1)})\xi_{a_1}(\t^{(2)})...\xi_{a_{T-1}}(\t^{(T)}).
\eeqn
and
\bean
\wt g(x^{(1)},...,x^{(T)})&=&\wt\xi_{a_0}(x^{(1)})\wt\xi_{a_1}(x^{(2)})...\wt\xi_{a_{T-1}}(x^{(T)}).
\eean
By Assumption 3,
\beqn
\wt\xi_{a_t}(\t,z)=\xi_{a_t}(\t)g_t(z)\quad\text{with}\quad g_t(z)=\frac{1}{C_t}e^{a_tz}g(z|\s),\quad C_t=\exp(-\frac12a_t\s^2+\frac12a_t^2\s^2).
\eeqn
Hence
\beqn
\wt g(x^{(1)},...,x^{(T)})=\prod_{t=1}^T\xi_{a_{t-1}}(\t^{(t)})\prod_{t=1}^Tg_{t-1}(z^{(t)})=g_\t(\wt\t)g_z(\wt z)
\eeqn
with
\beqn
g_z(\wt z)=\prod_{t=1}^Tg_{t-1}(z^{(t)}).
\eeqn
From \eqref{eq:log w}, we have
\bean
\E_{\wt g}[\wt w_i^2]&=&\int w_i^2(\wt\t)\prod_{t=1}^Te^{2(a_t-a_{t-1})z^{(t)}}g_z(\wt z)g_\t(\wt\t)d\wt zd\wt\t\\
&=&\int w_i^2(\wt\t)g_\t(\wt\t)d\wt\t\times \prod_{t=1}^T\int e^{2(a_t-a_{t-1})z^{(t)}}g_{t-1}(z^{(t)})dz^{(t)}\\
&=&\E_g[w_i^2]\prod_{t=1}^T\frac1{C_{t-1}}\exp\left(-\frac12{(2a_t-a_{t-1})\s^2}+\frac12{(2a_t-a_{t-1})^2\s^2}\right)\\
&=&\E_g[w_i^2]\exp\left(\s^2\sum_{t=1}^T(a_t-a_{t-1})(2a_t-1)\right)\\
&=&\E_g[w_i^2]\exp\left(\tau\s^2\right)
\eean
with $\tau=\sum_{t=1}^T(a_t-a_{t-1})(2a_t-1)$.
Note that $\E_{\wt g}[\wt w_i]=\E_g[w_i]$ by the unbiasedness property of IS.
Hence,
\bea\label{eq:CV}
1+\CV_{\wt g}(\wt w_i)&=&1+\frac{\Var_{\wt g}(\wt w_i)}{(\E_{\wt g}[\wt w_i])^2}=\frac{\E_{\wt g}[\wt w_i^2]}{(\E_{\wt g}[\wt w_i])^2}=
\frac{\E_{g}[w_i^2]}{(\E_{g}[w_i])^2}\exp(\tau\s^2)\notag\\
&=&\left(1+\CV_{g}(w_i)\right)\exp(\tau\s^2).
\eea
So
\beq\label{e:defineESS}
\ESS_\AISEL = \frac{M}{1+\CV_{\wt g}(\wt w_i)}=\frac{M}{1+\CV_{g}(w_i)}\exp(-\tau\s^2)=\exp(-\tau\s^2)\ESS_\AIS.
\eeq
Note that
\bean
\tau&=&\sum_{t=1}^T(2a_t^2-2a_ta_{t-1})-\sum_{t=1}^T(a_t-a_{t-1})>\sum_{t=1}^T(2a_t^2-a_t^2-a_{t-1}^2)-1=0.
\eean

\end{proof}
\begin{proof}[Proof of Proposition \ref{proposition}]
The proof follows \cite{Friel:2008} who prove the result in the case $\pi_0(\t)=p(\t)$
and the likelihood $p(y|\t)$ is analytically available.
Let
\beqn
\zeta(s)=\int \wt\eta_s(\t,z)d\t dz,\ 0\leq s\leq 1.
\eeqn
Then, $\zeta(0)=1$ and by Assumption 1, $\zeta(1)=p(y)$. 
Note that $\wt\xi_s(\t,z)=\wt\eta_s(\t,z)/\zeta(s)$.
\bean
\frac{d\zeta(s)}{ds}&=&\frac{d}{ds}\int\pi_0(\t)\left(\frac{p(\t)p(y|\t)e^z}{\pi_0(\t)}\right)^sg_N(z|\t)d\t d z\\
&=&\int\pi_0(\t)\left(\frac{p(\t)p(y|\t)e^z}{\pi_0(\t)}\right)^s\left[\log\frac{p(\t)p(y|\t)e^z}{\pi_0(\t)}\right]g_N(z|\t)d\t d z\\
&=&\int \wt\eta_s(\t,z)\left[\log\frac{p(\t)p(y|\t)e^z}{\pi_0(\t)}\right]d\t d z.
\eean
Hence,
\bean
\frac{d\log\zeta(s)}{ds}&=&\frac{1}{\zeta(s)}\frac{d\zeta(s)}{ds}\\
&=&\int \wt\xi_s(\t,z)\left[\log\frac{p(\t)p(y|\t)e^z}{\pi_0(\t)}\right]d\t d z\\
&=&\E_{(\t,z)\sim\wt\xi_s}\left[\log\frac{p(\t)p(y|\t)e^z}{\pi_0(\t)}\right].
\eean
So
\beqn
\int_0^1\E_{(\t,z)\sim\wt\xi_s}\left[\log\frac{p(\t)p(y|\t)e^z}{\pi_0(\t)}\right]ds = \log\zeta(1)-\log\zeta(0)=\log p(y).
\eeqn
\end{proof}

\bibliographystyle{apalike}
\bibliography{references}

\begin{thebibliography}{}

\bibitem[Andrieu et~al., 2010]{Andrieu:2010}
Andrieu, C., Doucet, A., and Holenstein, R. (2010).
\newblock Particle {Markov chain Monte Carlo} methods.
\newblock {\em Journal of the Royal Statistical Society, Series B}, 72:1--33.

\bibitem[Andrieu and Roberts, 2009]{Andrieu:2009}
Andrieu, C. and Roberts, G. (2009).
\newblock The pseudo-marginal approach for efficient {Monte Carlo}
  computations.
\newblock {\em The Annals of Statistics}, 37:697--725.

\bibitem[Beaumont, 2003]{Beaumont:2003}
Beaumont, M.~A. (2003).
\newblock Estimation of population growth or decline in genetically monitored
  populations.
\newblock {\em Genetics}, 164:1139--1160.

\bibitem[Bezanson et~al., 2012]{JULIA2012}
Bezanson, J., Karpinski, S., Shah, V.~B., and Edelman, A. (2012).
\newblock Julia: A fast dynamic language for technical computing.
\newblock {\em CoRR}, abs/1209.5145.

\bibitem[Chopin, 2002]{Chopin2002}
Chopin, N. (2002).
\newblock A sequential particle method for static models.
\newblock {\em Biometrika}, 89:539--551.

\bibitem[Chopin, 2004]{Chopin:2004}
Chopin, N. (2004).
\newblock Central limit theorem for sequential {M}onte {C}arlo methods and its
  application to {B}ayesian inference.
\newblock {\em The Annals of Statistics}, 32(6):2385--2411.

\bibitem[Chopin et~al., 2013]{Chopin:2012}
Chopin, N., Jacob, P.~E., and Papaspiliopoulos, O. (2013).
\newblock {SMC{\textasciicircum}2:} an efficient algorithm for sequential
  analysis of state-space models.
\newblock {\em Journal of the Royal Statistical Society B}, 75(3):397--426.

\bibitem[{Del Moral} et~al., 2006]{DelMoral:2006}
{Del Moral}, P., Doucet, A., and Jasra, A. (2006).
\newblock Sequential {M}onte {C}arlo samplers.
\newblock {\em Journal of the Royal Statistical Society, Series B},
  68:411--436.

\bibitem[Durbin and Koopman, 2001]{Durbin:2001}
Durbin, J. and Koopman, S.~J. (2001).
\newblock {\em Time Series Analysis by State Space Methods}.
\newblock Oxford University Press.

\bibitem[Filippone, 2013]{Filippone:2013}
Filippone, M. (2013).
\newblock Bayesian inference for {Gaussian} process classifiers with annealing
  and exact-approximate {MCMC}.
\newblock Technical report, University of Glasgow.

\bibitem[Fitzmaurice et~al., 2011]{Fitzmaurice:2011}
Fitzmaurice, G.~M., Laird, N.~M., and Ware, J.~H. (2011).
\newblock {\em Applied Longitudinal Analysis}.
\newblock John Wiley \& Sons, Ltd, New Jersey, 2nd edition.

\bibitem[Friel and Pettitt, 2008]{Friel:2008}
Friel, N. and Pettitt, A.~N. (2008).
\newblock Marginal likelihood estimation via power posteriors.
\newblock {\em J. R. Statist. Soc. B}, 70:589--607.

\bibitem[Gilks and Berzuini, 2001]{Gilks:2001}
Gilks, W.~R. and Berzuini, C. (2001).
\newblock Following a moving target - {Monte Carlo} inference for dynamic
  {B}ayesian models.
\newblock {\em Journal of the Royal Statistical Society, Series B}, 63.

\bibitem[Gordon et~al., 1993]{GordonSalmondSmith1993}
Gordon, N.~J., Salmond, D.~J., and Smith, A. F.~M. (1993).
\newblock A novel approach to non-linear and non-{G}aussian {B}ayesian state
  estimation.
\newblock {\em Radar and Signal Processing, IEE Proceedings}, 140:107--113.

\bibitem[Kim et~al., 1998]{KimShephardChib1998}
Kim, S., Shephard, N., and Chib, S. (1998).
\newblock Stochastic volatility: Likelihood inference and comparison with
  {ARCH} models.
\newblock {\em Review of Economic Studies}, 65(3):361--393.

\bibitem[Liu, 2001]{Liu:2001}
Liu, J.~S. (2001).
\newblock {\em Monte Carlo Strategies in Scientific Computing}.
\newblock Springer-Verlag, New York.

\bibitem[Neal, 2001]{Neal:2001}
Neal, R. (2001).
\newblock Annealed importance sampling.
\newblock {\em Statistics and Computing}, 11:125--139.

\bibitem[Omori et~al., 2007]{OmoriChibShephardNakajima2007}
Omori, Y., Chib, S., Shephard, N., and Nakajima, J. (2007).
\newblock Stochastic volatility with leverage: Fast and efficient likelihood
  inference.
\newblock {\em Journal of Econometrics}, 140:425--449.

\bibitem[Pitt et~al., 2012]{Pitt:2012}
Pitt, M.~K., Silva, R.~S., Giordani, P., and Kohn, R. (2012).
\newblock On some properties of {Markov chain Monte Carlo} simulation methods
  based on the particle filter.
\newblock {\em Journal of Econometrics}, 171(2):134--151.

\bibitem[Rasmussen and Williams, 2006]{Rasmussen:2006}
Rasmussen, C.~E. and Williams, C. (2006).
\newblock {\em Gaussian Processes for Machine Learning}.
\newblock MIT Press.

\bibitem[Robert and Casella, 1999]{RobertCassela1999}
Robert, C.~P. and Casella, G. (1999).
\newblock {\em Monte {C}arlo Statistical Methods}.
\newblock Springer--Verlag, New York.

\bibitem[Tran et~al., 2013]{Tran:2013}
Tran, M.-N., Scharth, M., Pitt, M.~K., and Kohn, R. (2013).
\newblock Importance sampling squared for {Bayesian} inference in latent
  variable models.
\newblock Technical report, University of New South Wales.

\end{thebibliography}

\end{document}